\numberwithin{equation}{section}
\theoremstyle{plain}
\newtheorem{Th}{Theorem}
 \theoremstyle{definition}
\newtheorem{Def}[Th]{Definition}
\newtheorem{Rem}[Th]{Remark}
\newtheorem{?}[Th]{Problem}
\newcommand{\R}{\mathbb{R}}
\newcommand{\C}{\mathbb{C}}
\newcommand{\tr}{\operatorname{tr}}
\begin{document}

\title[Classical information storage in a quantum 
system]
{Classical information storage in an $n$-level quantum 
system}

\author[P. E. Frenkel]{P\'eter E. Frenkel}

\address{E\"{o}tv\"{o}s Lor\'{a}nd University \\ Department of Algebra and Number Theory
 \\ H-1117 Budapest
 \\ P\'{a}zm\'{a}ny P\'{e}ter s\'{e}t\'{a}ny 1/C \\ Hungary}
\email{frenkelp@cs.elte.hu}

\author[M. Weiner]{Mih\'aly Weiner}
\address{Budapest University of Technology and Economics \\ Department of Analysis
 \\ H-1111 Budapest
 \\  M\H{u}egyetem rkp. 3--9\\ Hungary}
\email{mweiner@renyi.hu}

\thanks{The first author's research is partially supported by MTA R\'enyi
``Lend\"ulet'' Groups and Graphs Research Group and OTKA grant no.\ K109684. The second author's research is supported in part by the ERC Advanced Grant 227458 ``Operator Algebras and
Conformal Field Theory'', OTKA grant no.\ 104206 and by the ``Bolyai J\'anos'' Research Scholarship of the Hungarian Academy of Sciences.} 

 \dedicatory{Dedicated to Professor Andor Frenkel on the occasion of his $80^{\text{th}}$ birthday}

 \subjclass[2010]{}

 \keywords{}

\begin{abstract}A game is played by a team of two 
--- say Alice and Bob --- in which the value of a random variable $x$
is revealed to Alice only, who cannot freely communicate with Bob.
Instead, she is given a quantum $n$-level system, respectively a 
classical $n$-state system,  which she can put in possession of Bob in any
state she wishes. We evaluate how successfully they managed to 
store and recover the value of $x$ 
 by requiring 
Bob to specify a value $z$ and giving a reward of value $  f(x,z)$ to the team. 

We show that whatever the probability distribution of $x$ and the 
reward function $f$ are, when using a quantum $n$-level system, 
the maximum expected reward obtainable with the best possible 
team strategy is 
equal to that obtainable with the use of a classical $n$-state system.
The proof relies on mixed discriminants of positive matrices and ---
perhaps surprisingly --- an application of the Supply--Demand Theorem for bipartite graphs. 

As a corollary,  we get an infinite set of new, 
dimension dependent inequalities regarding positive operator valued measures and density operators on complex $n$-space.

As a further corollary, we see that the greatest value, with respect to a given distribution of $x$, of the mutual information $I(x;z)$ that is obtainable using an $n$-level quantum system
  equals the analogous maximum for a  classical $n$-state system. 
\end{abstract}

\maketitle

\section{Introduction}

In contrast to a classical bit which has only $2$ pure states, a
qubit has infinitely many. However, this does not necessarily mean that 
we can store more (classical) information in a qubit than in a classical bit.
The point is that although the qubit has infinitely many different pure 
states, it is impossible to distinguish these states with {certainty}.
This is a fundamental fact, and cannot be circumvented by some better
measuring device.

In the case of a qubit, one can distinguish with certainty between at most
$2$ states. In general, in the case of an $n$-level system (whose state 
space is modelled by the set of density operators of an $n$-dimensional complex
Hilbert space),  one can distinguish with certainty between at most $n$ states.
So in this respect a quantum $n$-level system performs 
 like a classical $n$-state system. However, this does not make them 
necessarily equivalent. Perhaps it is possible to distinguish between 
$k>n$ states of a quantum $n$-level system not with certainty, 
but in a way that is --- in some sense --- ``closer'' to certainty than 
what we can achieve with a classical $n$-state system. 

How to define ``closer to certainty''? In general, we 
may view our qubit as a memory --- or as is often done in the literature: 
as a channel\footnote{
Usually by specifying a {\it channel} one means to fix a collection 
of states corresponding to encoding, while the measurement on the decoding side
remains unspecified. For us neither encoding nor decoding is fixed as both 
are  to be optimalized; only the level $n$ of the system (i.e.\! the 
dimension of the Hilbert space) is fixed. For this reason we prefer to talk
about a {\it memory unit} rather than a channel.}
 --- in which there is an ingoing and an outgoing information:
Alice chooses a certain state from a previously fixed set of states and 
puts the system into the selected state, then passes it to Bob, who 
will have to try to figure out the chosen state. 
So one may investigate the 
issue from the point of view of some kind of (classical)
{\it channel capacity}. 
One idea is of course to use  Shannon-type 
informational capacity which is well-investigated in the quantum setting; 
see e.g.\! \cite{quantumbook} on quantum information theory and \cite{OP} on quantum entropy; see also Section~\ref{mutual} of the present paper.

However, here we argue that this in itself cannot fully settle the 
problem. As an example, suppose that the following game is played.

A \$1 bill is put randomly and with equal probability into one of  
$3$ boxes. Bob will  pick one of the boxes and  he will  get what is inside that 
box. Alice knows where the \$1 bill has been put, and she wishes to help Bob. 
However, Alice is not allowed to  directly tell  Bob where the money is (in which case Bob could 
always get the \$1 bill with certainty). Instead, she is only allowed to 
send to Bob a classical bit, respectively  a qubit (not previously entangled to anything 
else) whose state she can manipulate as she wishes. That is, she is allowed 
to send a classical or a quantum bit 
of information.

They may agree on some scheme beforehand. For example, played with a 
classical bit, Alice and Bob can agree that the bit-value $0$ will mean 
that the money is in box nr.\! $1$ (in which case Bob will pick
box nr.\! $1$) and the bit-value $1$ will mean that 
the money is   either in box nr.\! $2$ or in box nr.\! $3$ 
(in which case Bob will toss a coin and accordingly pick 
box nr.\! $2$ or $3$). The question then becomes: after the game is
played once, what is the expected value of the money won? 

Every team-strategy leads to a specific {\it channel matrix}, i.e.\,  
a collection of conditional probabilities of the type $a_{ij}$ 
where 
$$
a_{ij}=P(\textrm{Bob chooses the } i^{\rm th}\; 
\textrm{box given that the money is in the } j^{\rm th}\; \textrm{box}),
$$
so that for example the previously described simple strategy using a classical
bit gives the channel matrix
$$
\left(
\begin{matrix}
1 & 0 & 0 \\
0 & 1/2 & 1/2 \\
0 & 1/2 & 1/2
\end{matrix}\right).
$$
In general, the channel matrix $A$ will be a {\it stochastic matrix}:
its entries are all nonnegative, and each column sums to $1$. The
above channel matrix will allow Bob to have an expected win of $2/3$ 
dollars, so we may say that its ``money capacity'' is $c_\$=2/3$. It is 
an elementary exercise to show that this is the best we can get using
a classical bit.

Note that in general $c_\$(A) = \frac{1}{3}{\rm tr}(A)$ whereas 
the ``usual'' (informational) capacity $c(A)$ would be the maximum 
{\it mutual information}
between Bob's choice and the actual place of the money; for the above 
channel matrix $c$ is precisely $1$ bit. Now, by Holevo's celebrated 
theorem \cite{holevopaper}, even if Alice and Bob used a quantum bit instead 
of a classical one, the informational capacity $c$ could not get above 
$1$ bit. This is in accordance with the (common) belief that a single
qubit (on its own) is  worth no more than a classical one.

The fact that in  {\it superdense coding} \cite{densecoding}
Alice manages to transmit $2$ bits of information to Bob by 
physically sending only $1$ qubit is no contradiction to what 
was said. 
Indeed, in superdense coding $2$ qubits are used:
for the {\it decoding} part both of them are necessary. However, 
for the {\it encoding} part only one of them is needed; this is 
achieved by previously entangling the $2$ qubits. So the 
$2$ classical bits are actually stored in $2$ quantum bits; the 
surprising feature is rather that it is a kind of 2-bit 
memory made out of 2 qubits where for the ``read out'' we need both, 
but for the ``write in'' we only need to get in touch with one of them.

However, here we are interested by how much (if any) better is a qubit 
{\it on its own} (not entangled to other qubits) than a classical one.
So can we apply Holevo's theorem to conclude that in the described 
game, even by using a qubit, Alice and Bob cannot win more than 
$2/3$ of a dollar (i.e.\ the maximum amount possible when a classical bit
is used)? The answer is negative. In fact, consider the stochastic matrix
$$
A=\left(
\begin{matrix}
3/4 & 1/8 & 1/8 \\
1/8 & 3/4 & 1/8 \\
1/8 & 1/8 & 3/4
\end{matrix}\right).
$$
Its ``money capacity'' $c_\$(A) = \frac{1}{3}\tr A=\frac{3}{4}$ 
is larger than what can be achieved by using a classical bit. However, 
elementary arguments together with a straightforward computation show that 
$c(A) = \frac{7}{4}{\rm log}(3)-\frac{9}{4}{\rm log}(2)$, which is 
{\it smaller} than ${\rm log}(2)$, i.e., smaller than one bit.
This should not be considered a surprise: in Shannon's
{\it Noisy Channel Coding Theorem}, the channel capacity as 
reliable transmission rate is only achieved in the ``long run''; 
with a single use of the channel, things can go different.
 
Thus, as shown by the previous example, Holevo's theorem cannot rule 
out the existence of a strategy by which using a qubit Alice and 
Bob can win more money in this game (in terms of expected values) than 
what is possible using a classical bit. 

Nevertheless, for this actual game it is not difficult to come up with 
an argument \cite{Aidan} to show that $c_\$ \leq 2/3$ holds even if Alice and 
Bob use a qubit. However, why sticking exactly to this game, and why 
investigating only the case of $2$-level systems (i.e.\! single bits)?

In general we might consider the following scheme. The value of a 
random variable $x$ is revealed to Alice but not to Bob. 
Though previous to the game they can meet and agree to follow any 
kind of strategy they like, during the game Alice cannot
freely communicate with Bob. Instead, she is given a quantum $n$-level system 
(or alternatively: a
classical $n$-state system) which she can put in possession of Bob in any
state she wishes. Bob is then required to specify a value $z$.
We evaluate how successfully they managed to
store and recover the value of $x$ by giving a reward of value $ f(x,z)$ 
to the team, where $f$ is some previously fixed ``reward function'' 
(e.g.\ in our original game the reward was \$1 if $x=z$ and zero 
otherwise).

Consider the sets $\mathcal C_n(k,l)$ and $\mathcal Q_n(k,l)$ defined as convex hulls
of all possible $k \times l$ channel matrices that can be obtained by 
Alice passing to Bob a classical $n$-state system or a quantum $n$-level 
system, respectively. We postpone a more detailed description of these
sets to the next section, but note that obviously $\mathcal C_n(k,l) \subseteq 
\mathcal Q_n(k,l)$.

Since we think of the distribution of $x$ as 
fixed with the rules of the game, 
the expected reward ${\mathbb E}(f(x,z))$ is just an arbitrary affine
 linear functional of the channel matrix of Alice and Bob 
(e.g.\ in our original game it was $1/3$ times the trace). Thus, there would exist a game of the specified type in which it is more efficient 
to use a quantum $n$-level system than a classical one if and only if we had 
$\mathcal C_n(k,l) \ne \mathcal Q_n(k,l)$. However, our main result is that 
$\mathcal C_n(k,l) = \mathcal Q_n(k,l)$ for all values of $n$, $k$, and $l$.

Note that this result
is  trivial  for $n\ge\min (k,l)$, since then both sets consist of all stochastic $k\times l$ matrices. However, in general the equality  is nontrivial in the sense that it results in some new, dimension
dependent inequalities regarding positive operator valued measures and density matrices.

It is worthwile to make a remark on the specific type of game we are dealing with.
Suppose for a change  that $x$ and $y$ are drawn uniformly  and independently from the set $\{A,B,C\}$ and $x$ is revealed to Alice (but not to Bob) whereas the value of $y$ is revealed to Bob (but not to Alice). This time they get rewarded if Bob correctly guesses whether $x=y$ or not, but again, they can only communicate in a certain restricted way: Alice is allowed to pass to Bob a single quantum bit (or alternatively: a single classical bit). Now it is easy to show that in this game, using a quantum bit rather than a classical one is indeed more advantageous. This does not contradict our result --- this game is not of the discussed type: here Bob recieves information from {\it two} separate sources (apart from what he gets from Alice, he also receives the value of $y$). So again we stress that our result concerns the capacity of a quantum $n$-level system when it is considered {\it on its own}, without further supporting classical or quantum information (like the value of $y$ in the new game). The new  game is in fact a {\it quantum fingerprinting} problem. For such problems,  a quantum system is known \cite{quantumfingerprinting} to perform exponentially better than the corresponding classical system.

\bigskip\noindent
\bf Notations and terminology. 
\rm The set $\{1,\dots, k\}$ is denoted by  $[k]$. We write $e_{ij}$ for the matrix that has an entry 1 at position $(i,j)$ and all other entries zero. The identity matrix is $\bf 1$. A matrix is \it stochastic
\rm if all entries are nonnegative reals and each column sums to 1.
A complex  matrix $A$  is \it psdh \rm if it is positive semidefinite Hermitian, written $A\ge 0$. A \it positive operator valued measure (POVM), \rm  also called a \it partition of unity, \rm is a sequence $E_1$, \dots, $E_k$ of psdh matrices summing to $\bf 1$.
A \it density matrix \rm is a psdh matrix with trace 1.

\section{Sets of channel matrices}
Throughout this section, let $n$, $k$ and $l$ be fixed positive integers.

Suppose that a letter of an alphabet containing $l$ letters is to be encoded 
by Alice in a classical $n$-state system, whereas on the decoding side
Bob uses an alphabet containing $k$ letters. Which  channel matrices
can Alice and Bob realize by a suitable  strategy? 

Every strategy is a convex combination of {\it pure} strategies; strategies
--- we are in the classical setting --- in which no randomness appears. Thus 
in the  case of a pure strategy, the channel matrix is a $k\times l$ matrix such
that
\begin{enumerate}
\item
each entry is either $1$ or $0$,
\item
in each column there is exactly one $1$, 
\item
the number of nonzero rows is at most $n$. 
\end{enumerate}
This last property is due to the fact that Alice and Bob use a 
classical $n$-state system: if no randomness is used, then at
decoding at most $n$ different things can happen, regardless of the number $k$ of letters that  Bob has in his alphabet.
Thus we make the following

\begin{Def}
Let $\mathcal C=\mathcal C_n(k,l)$ be the convex hull of $k\times l$  matrices satisfying properties (1), (2) and (3) above. 
\end{Def}

Then $\mathcal C$ is a convex polytope whose vertices are the $k\times l$  matrices satisfying (1), (2) and (3). 
Note that $\mathcal C$ is also the convex hull of $k\times l$ stochastic  matrices with at most $n$ nonzero rows.

Now suppose that instead of a classical $n$-state system, Alice can
use an $n$-level quantum system. Its state space can be identified 
with the set of complex $n\times n$ {\it density matrices}: the set
of matrices $\rho\in M_n(\C)$ such that
$$
\rho \geq 0, \; {\tr}(\rho)=1.
$$ 
A specific measurement scheme with $k$ possible 
outcomes gives rise to an affine map from this state space to the set of 
classical probability distributions on the set $[k]=\{1,2,\ldots, k\}$. Such 
an affine map is always given in the following way: we have a
{\it positive operator valued measure} (POVM) 
$E_1,E_2,\ldots, E_k \in M_n(\C)$, i.e.\! $E_i\geq 0$ for each 
$i=1,2,\ldots , k$ and $E_1+E_2+\ldots +E_k = \bf 1 $ (identity matrix), and the 
map in question is
\[
\rho \mapsto \left({\tr}(E_1\rho ), {\tr}(E_2\rho ), \ldots 
{\tr}(E_k\rho)\right),
\]
see details in \cite[Section 1.6]{holevobook}. Thus the most general
measurement with $k$ outcomes is a POVM $E_1$, $E_2$, \dots, $E_k$ in the sense
that if the state of the system was described by the density operator
$\rho$ then the measurement will result in the $i^{\rm th}$ outcome 
with probability ${\tr}(E_i\rho )$.

Now let us return to the set of possible channel matrices. On the 
encoding side, Alice needs to choose a state for each letter to be 
encoded. On the decoding side, Bob needs to choose a measurement whose
result will be interpreted as ``read out''. Thus a specific strategy is
given by the choice of $l$ density matrices $\rho_1, \rho_2 , \dots, \rho_l 
\in M_n(\C)$ and a POVM $E_1,E_2,\ldots, E_k\in M_n(\C)$ resulting in 
the channel matrix $A$ with entries
$$
a_{ij} = \tr (E_i\rho_j).
$$
The set of matrices $A$ we can obtain depends on $n$, $k$ and $l$.
We make the following

\begin{Def}Let $\mathcal Q=\mathcal Q_n(k,l)$ be the convex hull of $k\times l$  matrices of the form $\left(\tr (E_i\rho_j)\right)$, where
 $E_1, \dots, E_k\in M_n(\C)$ is  a POVM and $\rho_1, \dots, \rho_l\in M_n(\C)$ are density matrices.
\end{Def}
It may not be obvious that $\mathcal Q$ is a polytope, but in fact, our main result is
\begin{Th}\label{main}
$\mathcal C=\mathcal Q$.
\end{Th}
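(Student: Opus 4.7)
My approach is to prove $\mathcal Q\subseteq\mathcal C$; the reverse inclusion is immediate, realizing any extreme point of $\mathcal C$ by POVMs and density matrices that are diagonal in a fixed orthonormal basis of $\C^n$. Two preliminary reductions let me assume every $E_i=c_iu_iu_i^*$ is rank one and every $\rho_j=v_jv_j^*$ is pure. The pure-state reduction is columnwise, using that density matrices form the convex hull of pure states and that $\rho\mapsto(\tr(E_i\rho))_i$ is linear. The rank-one reduction spectrally decomposes each $E_i$: if the refined $(k'\times l)$ channel matrix, with one row per eigenprojection of each $E_i$, lies in $\mathcal C_n(k',l)$, then merging the rows coming from a common $E_i$ yields an element of $\mathcal C_n(k,l)$, since summing rows cannot increase the number of nonzero rows. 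After these reductions, let $V$ be the $k\times n$ matrix with rows $\sqrt{c_i}\,u_i^*$; the POVM identity becomes $V^*V=\mathbf 1$, so $V$ is an isometry onto the $n$-dimensional subspace $H:=\operatorname{range}V\subseteq\C^k$, and setting $w_j:=Vv_j$ one has $a_{ij}=|(w_j)_i|^2$ with $w_1,\dots,w_l$ unit vectors in $H$.

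Next I introduce the candidate probability distribution on $n$-subsets $S\subseteq[k]$,
\[
p_S:=|\det V_S|^2=\det\bigl((P_H)_S\bigr),
\]
where $P_H=VV^*$ is the orthogonal projection onto $H$ and $(P_H)_S$ is its $S\times S$ principal submatrix; the Cauchy--Binet formula (the simplest mixed-discriminant identity) gives $\sum_S p_S=\det(V^*V)=1$. The decomposition $A=\sum_S p_S B^{(S)}$ is then assembled column by column. For each fixed $j\in[l]$, I look for nonnegative coefficients $\pi^{(j)}_{iS}$ supported on $\{(i,S):i\in S\}$ with marginals $\sum_S\pi^{(j)}_{iS}=a_{ij}$ and $\sum_i\pi^{(j)}_{iS}=p_S$; then $B^{(S)}_{ij}:=\pi^{(j)}_{iS}/p_S$ is column stochastic and supported on the row set $S$, hence $B^{(S)}\in\mathcal C_n(k,l)$. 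Existence of each coupling is exactly the content of the Supply--Demand Theorem applied to the bipartite graph with \emph{suppliers} $=n$-subsets $S$ (supply $p_S$), \emph{demanders} $=$ rows $i\in[k]$ (demand $a_{ij}$), and an edge $S\sim i$ iff $i\in S$. Using $\sum_S p_S=1=\sum_i a_{ij}$, the required Hall condition reduces to
\[
\sum_{\substack{S\subseteq U\\|S|=n}} p_S\;\le\;\|w_j|_U\|^2\qquad\text{for every }U\subseteq[k],
\]
where $w_j|_U$ denotes the restriction of $w_j$ to the coordinates in $U$.

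Proving this inequality is the real heart of the matter, and I expect it to be the main technical obstacle; it is where the paper's mixed-discriminant machinery for positive matrices should enter. Its left-hand side equals $e_n$ of the eigenvalues of the principal submatrix $(P_H)_U$, by the classical identity that the $n$th elementary symmetric function of the eigenvalues of a psdh matrix is the sum of its $n\times n$ principal minors. Since $(P_H)_U$ is a compression of a rank-$n$ orthogonal projection, its eigenvalues $\lambda_1\ge\cdots\ge\lambda_{|U|}$ all lie in $[0,1]$ and at most $n$ of them are positive; hence $e_n(\lambda)=\lambda_1\cdots\lambda_n\le\lambda_n$, and is zero when the rank is less than $n$. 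For the right-hand side, write $D_U$ for the diagonal coordinate projection onto $U$; since $w_j\in H$, $\|w_j|_U\|^2=\langle w_j,D_Uw_j\rangle=\langle w_j,P_HD_UP_H\,w_j\rangle$, which is at least the smallest eigenvalue of $P_HD_UP_H$ restricted to $H$. Because $P_HD_UP_H$ and $D_UP_HD_U$ share the same nonzero spectrum (they are $T^*T$ and $TT^*$ for $T=D_UP_H$), and the nonzero eigenvalues of $D_UP_HD_U$ are exactly those of $(P_H)_U$, that smallest eigenvalue equals $\lambda_n$ when $(P_H)_U$ has full rank $n$, and the inequality is trivial otherwise. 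This closes the estimate and, combined with the Supply--Demand argument, produces the decomposition $A=\sum_S p_S B^{(S)}\in\mathcal C$.
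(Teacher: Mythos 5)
Your proof is correct, and at its core it follows the same route as the paper: a determinantal probability measure on $n$-element index collections, coupled to each column of $A$ via the Supply--Demand Theorem, with the key estimate $\sum_{S\subseteq U}p_S\le\sum_{i\in U}a_{ij}$ --- which, unwound, is exactly the paper's inequality $\det E_U\le\operatorname{tr}(E_U\rho_j)$, since Cauchy--Binet gives $\sum_{S\subseteq U,\,|S|=n}|\det V_S|^2=\det(V^*D_UV)=\det E_U$. The differences are in execution. The paper works with arbitrary POVMs and density matrices directly, defining $p_I=D(E_{i_1},\dots,E_{i_n})$ on tuples $I\in[k]^n$ via the mixed discriminant and quoting Bapat's positivity theorem; your rank-one and pure-state reductions (both valid: merging rows preserves membership in $\mathcal C$ because $\mathcal C$ is also the hull of all stochastic matrices with at most $n$ nonzero rows, and the multilinear expansion over pure decompositions of the $\rho_j$ yields a convex combination) let you replace that input by the elementary Cauchy--Binet identity, at the cost of some preprocessing. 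For the key inequality the paper has a one-line argument --- all eigenvalues of $E_U$ lie in $[0,1]$, so $\det E_U$ is at most the smallest eigenvalue, hence $(\det E_U)\mathbf{1}\le E_U$ and $\operatorname{tr}(E_U\rho_j)\ge\det E_U$ --- whereas your route through $e_n$ of the compressed projection and the shared nonzero spectrum of $P_HD_UP_H$ and $D_UP_HD_U$ is correct but longer; your step $\lambda_1\cdots\lambda_n\le\lambda_n$ is the same ``product of $[0,1]$-eigenvalues is at most the smallest one'' observation in disguise. Each version buys something: yours is self-contained at the level of linear algebra, the paper's handles general $E_i$ and $\rho_j$ in one stroke and extends immediately to the infinite-alphabet remark.
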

\noindent  We start by proving the trivial inclusion.

\bigskip
\noindent
\bf Proof of $\mathcal C\subseteq\mathcal Q$. \rm
Assuming $A\in \mathcal C$, we show that $A\in \mathcal Q$.
We know  that $A$ is a stochastic matrix, and we may assume that
 only  the first $\min (n,k)$ rows of $A$ can be nonzero.

Put \[\rho_j=\sum_{i=1}^{\min(n,k)}a_{ij}e_{ii}.\] These are density matrices.

When $n\le k$, put $E_i=e_{ii}$ for $i\le n$ and $E_i=0$ otherwise. 
When  $n\ge k$, put $E_i=e_{ii}$ for $i\le k-1$ and $E_k=\sum_{i=k}^n e_{ii}$. 
In either case, this is a POVM. We have $\tr (E_i\rho_j)=a_{ij}$, whence $A\in \mathcal Q$. 
\qed

\bigskip
\noindent
For the proof of the reverse inclusion, we  recall the definition and the positivity property of mixed discriminants.

The determinant is a homogeneous polynomial of degree $n$ on $M_n(\C)$. 
Therefore,  there exists a unique symmetric $n$-linear
function $D$ such that
$$
D(X,\ldots , X) = \det X
$$
for all $X\in M_n(\C)$. This function $D$ is the  
{\it mixed discriminant}. By \cite[Lemma 2(vi)]{B}, if
$E_1$, \dots, $ E_n$ are all positive semidefinite Hermitian  matrices, then
$$
D(E_1,\ldots, E_n)\geq 0.
$$

\bigskip\noindent\bf Proof of $\mathcal Q\subseteq\mathcal C$. \rm
Assume that $A\in \mathcal Q$. We prove that $A\in \mathcal C$.
We may assume that
 $a_{ij}= \tr (E_i\rho_j)$, where  $E_1, \dots, E_k\in M_n(\C)$ is  a POVM and $\rho_1, \dots, \rho_l\in M_n(\C)$ are density matrices.

For $I=(i_1,\dots, i_n)\in [k]^n$, put
\[p_I=D(E_{i_1},\dots, E_{i_n}),\] where $D$ is the mixed discriminant
.
We have $p_I\ge 0$ for all $I$. Thus, we get a  measure $P$
on $[k]^n$ defined by $P(S)=\sum_{I\in S}p_I$.
Using the multilinearity of $D$ and the assumption that $E_1$, \dots, $E_k$ is a partition of unity, we see that
\[P([k]^n)=D(\bf 1,\dots, \bf 1)=\det\bf  1=\rm 1,\]
so $P$ is  a probability measure.
In fact, for any $R\subseteq [k]$, we may put $E_R=\sum_{i\in R} E_i$, and then  we have
\[P(R^n)=\det E_R.\] Since $0\le E_R\le\bf 1$, all eigenvalues of $E_R$ are in $[0,1]$.
Thus, $\det E_R$, the product of eigenvalues, does not exceed the smallest eigenvalue. Hence,
$(\det E_R){\bf 1}\le E_R$, so, for all $j$,
\[\tr (E_R\rho_j)\ge \tr ((\det E_R){\bf 1} \rho_j)=\det E_R.\]
The left hand side here is $A_j(R)$, where $A_j$ is the probability measure on $[k]$ given by the $j$-th column of $A$. So we have
\[A_j(R)\ge P(R^n)\qquad\textrm{ for all }R\subseteq [k].\]

Let us connect $I\in [k]^n$ to $i\in [k]$ by an edge if $i$ occurs in $I$. This gives us a bipartite graph. The neighborhood of any set $S\subseteq [k]^n$ is the set $R\subseteq [k]$ of indices occurring in some element of $S$. We always have $S\subseteq R^n$, whence
$A_j(R)\ge P(R^n)\ge P(S)$. Thus, by the Supply--Demand Theorem, there exists a probability measure $\tilde P_j$ on $[k]^n\times [k]$ which is supported on the edges of the graph and has marginals $P$ and $A_j$. Whenever $p_I\ne 0$, let $B(I)$ be the $k\times l$ stochastic matrix whose $j$-th column is given by the conditional distribution  $\tilde  P_j|I$ on $[k]$. Then $B(I)$ has at most $n$ nonzero rows, and $A=\int BdP\in \mathcal C$.
\qed

\begin{Rem} (Infinite channel matrices.)  We may replace  $[l]$ by any subset $J$ of an affine space. 
 In this setting, the polyhedron of $k\times l$ stochastic matrices is replaced by the set  of those  nonnegative functions on $[k]\times J$ that sum to 1 on $[k]$ for any fixed $j\in J$ and are piecewise linear on $J$ for any fixed $i\in [k]$.
 Let $\mathcal C_n(k,J)$ be the  convex hull of elements that are supported on $N\times J$ for some $N\subseteq [k]$ of cardinality at most $n$. 
 Let $\mathcal Q_n(k,J)$ be the convex hull of all elements of the form $(i,j)\mapsto\tr (E_i\rho(j))$, where $E_1$, \dots, $E_k$ is a POVM  and $\rho$ is a piecewise linear map from $J$ to the set of $n\times n$ density matrices.  Then $\mathcal C_n(k,J)=\mathcal Q_n(k,J)$. This follows easily from the proof of Theorem~\ref{main}, so this is left to the reader.

Going one step further, we may also replace  $[k]$ by a separable metric 
 space $X$. Let $\mathcal M(X)$ be the set of 
 probability measures on the Borel sets of $X$, endowed with the Prokhorov metric. In this setting, the polyhedron  of $k\times l$ stochastic matrices is replaced by the set of piecewise linear functions $J\to\mathcal M(X)$, endowed with the supremum metric. 
 Let $\mathcal C_n(X,J)$ be the closure of the convex hull of elements whose range is contained in   $\mathcal M(N)$ for some  $N\subseteq X$ of cardinality at most $n$.   Let $\mathcal Q_n(X,J)$ be the closure of the convex hull of all elements of the form $j\mapsto\tr (E\rho(j))$, where $E:\mathcal B(X)\to M_n(\C)$ is a positive operator valued 
 probability measure on the Borel sets of $X$ and $\rho$ is a piecewise linear map from $J$ to the set of $n\times n$ density matrices. 
 Then $\mathcal C_n(X,J)=\mathcal Q_n(X,J)$. This follows easily from the previous paragraph and the fact that finitely supported POVMs are dense (due to the separability of  $X$). 

Everything said above remains true if $J$ is any set, resp.\ a topological space, and the words `piecewise linear' are erased, resp.\ replaced by `continuous' on all occurrences.  
\end{Rem}

\section{Inequalities for POVM's and density matrices}
As before, let  $n$, $k$ and $l$ be positive integers. If a linear inequality is satisfied by the entries  of any $k\times l$  stochastic 0-1 matrix with at most $n$ nonzero rows, then we can use Theorem~\ref{main} to deduce that it is also satisfied by the entries of any $A\in \mathcal Q_n(k,l)$.  
This is a way to get new inequalities for POVM's and density matrices.
Therefore, we want to find  inequalities satisfied by all $A\in \mathcal C_n(k,l)$.  

When $n\ge\min (k,l)$, the polytope $\mathcal C_n(k,l)$ is obviously the set of all stochastic $k\times l$ matrices and we do not get anything interesting.

In general, we are not
able to describe the faces of the polytope $\mathcal C_n(k,l)$. However, it is clear that a $k\times l$ real  matrix 
$A$ belongs to the polytope if and only if
it satisfies all linear inequalities \begin{equation}\label{ineq}\tr (CA)\ge c \qquad  (C\in \R^{l\times k}, c\in\R)\end{equation} that the vertices satisfy. The vertices are the stochastic 0-1 matrices $A$ with at most $n$
nonzero rows,  and  all of them  satisfy \eqref{ineq} if and only if { for all }$ N\subseteq [k]$, $|N|=n$, we have
\begin{equation}\label{testing} \sum_r\min_{i\in N} c_{ri}\ge c\end{equation} for the entries of the matrix $C=(c_{ri})$. For example, if $C$ is a 0-1 matrix such that any $n$ columns have at least one 1 at the same position, and $c=1$, then 
 the inequalities \eqref{testing} hold, and therefore $\tr (CA)\ge 1$ for all $A$ in the polytope. E.g., let $n=2$ and \begin{equation}\label{CA}C=\begin{pmatrix} 0&1&1&1\\1&1&0&0\\1&0&1&0\\1&0&0&1\end{pmatrix},\qquad A=\begin{pmatrix} 1/2&0&0&0\\1/6&0&1/2&1/2\\1/6&1/2&0&1/2\\1/6&1/2&1/2&0\end{pmatrix},\end{equation} then $\tr (CA)=1/2$, so $A$ is not in the polytope $\mathcal C_2(4,4)=\mathcal Q_2(4,4)$.

Now observe that if $C$ is  a matrix of size $m\times k$, with arbitrary $m$, such that the inequalities \eqref{testing} hold, then any vertex, and therefore any point $A$ of the polytope $
\mathcal C$
satisfies  
\begin{equation}\label{strong}\sum_r\min_{j\in [l]} (CA)_{rj}\ge c,\end{equation} which is stronger than \eqref{ineq}.  Note that \eqref{strong} can be rewritten as a  system of $l^r$ linear inequalities holding simultaneously. Namely, we choose a $j_r$ for each  $r$ and write
\begin{equation}\sum_r (CA)_{rj_r}\ge c.\end{equation}

As an example, choose $n\le r\le k$,  let $m=\binom kr$, and let the rows of $C$ be indexed by the $r$-element subsets $S$ of $[k]$. Let $c_{S,i}=1$ if $i\in S$ and zero otherwise. Then the inequalities \eqref{testing} hold with \[c=\binom{k-n}{r-n}=\binom{k-n}{k-r}\] and therefore \eqref{strong} also holds, i.e.,\[\sum_{|S|=r}\min_{j\in [l]}\sum_{i\in S}a_{ij}\ge
\binom{k-n}{k-r}\]  for all $A$ in the polytope.  Replacing $r$ by $k-r$
and $S$ by $[k]-S$, and using that $A$ is stochastic, we get
\begin{equation}\label{r}\sum_{|S|=r}\max_{j\in [l]}\sum_{i\in S}a_{ij}\le\binom kr-
\binom{k-n}{r}\end{equation}  for all $A$ in the polytope, whenever $0\le r\le k-n$. When $r=1$, this is immediate for $A\in\mathcal Q_n(k,l)$ from the fact that for any density matrix $\rho$, we have $\rho\le \bf 1$ and so $\tr (E\rho)\le\tr E$ for any psdh matrix $E$.  However, when $r\ge 2$, the inequalities \eqref{r} seem nontrivial for $A\in\mathcal Q_n(k,l)$, and they only follow from Theorem~\ref{main}. Note that the  inequalities \eqref{r} are not sufficient to describe the polytope. Indeed, for $k=l=4$ and $n=2$, the matrix $A$ in \eqref{CA} satisfies \eqref{r} for all $r$, but is not in the polytope.

These examples already show that we have a huge freedom in choosing $C$, and the combinatorics of  families of subsets of $[k]$ enters into  the subject of finding dimension dependent linear  inequalities for values $\tr (E_i\rho_j)$. This could  turn out to be interesting.
For example, think of the famous question of {\it mutually unbiased bases}.
A complete set of mutually unbiased bases can also be described as a set
of density operators $\rho_1,\ldots, \rho_{n(n+1)}\in M_n(\C)$ with
$({\tr(\rho_i\rho_j)})_{i,j=1}^{n(n+1)}$ being certain prescribed 
values; see e.g.\! \cite{polytope} for a good review or \cite{en} for some 
recent development. If $n$ is a power of a prime, then such systems exist, 
while for other dimensions such systems are believed to not to exist, although this  
has not yet been proved. Thus, to prove nonexistence, one will have to use 
inequalities (or other tools) that show nontrivial dependence on the dimension $n$.

To close this section, we mention the slightly related open question of describing the cone of \it completely positive semidefinite \rm matrices. These are   $k\times k$  matrices of the form  $X=(\tr (A_iA_j))$, where $k$ is fixed, $n$ is arbitrary, and $A_1$, \dots, $A_k$ are positive semidefinite $n\times n$ matrices. (Note that requiring the $A_i$ to be \it real \rm matrices rather than \it  complex \rm ones leads to the same notion of complete positive semidefiniteness.)  Clearly, $X$ is  positive semidefinite, with
nonnegative real entries. However, not all positive semidefinite and entrywise nonnegative matrices are completely positive semidefinite, even though $n$ was arbitrary in the definition; see \cite{F, mi,  LP}.

\section{Mutual information}\label{mutual}
We now wish to place Theorem~\ref{main} in context with respect to Holevo's inequality.
First, we recall some basic information theory.
The \it Shannon entropy \rm of a sequence $p=(p_1,\dots, p_n)$ of nonnegative reals summing to 1 (i.e., a probability distribution) is defined to be \[H(p)=\sum_{i=1}^np_i\log\frac 1{p_i}.\]  We have \begin{equation}\label{log}0\le H(p)\le\log n.\end{equation}
The \it  von Neumann entropy \rm $S(\rho)$ of a  density matrix $\rho$ is the Shannon entropy of its eigenvalues. Von Neumann entropy is a  concave function on density matrices.  For density matrices $\rho_1$, \dots, $\rho_l$ of size $n\times n$ and  a probability distribution $q=(q_1, \dots, q_l)$, define \[\chi=S\left(\sum q_j\rho_j\right)-\sum q_jS(\rho_j).\]  We have \begin{equation}\label{chirange}0\le\chi\le \min(H(q),\log n).\end{equation}

When $x$ is a discrete random variable with distribution $p$, we write $H(x)=H(p)$.  Given a    matrix $M$ whose elements are nonnegative and sum to 1, we may view $M$ as the distribution of a pair $(z,x)$. The mutual information between $z$ and $x$ is defined to be \[I=H(z)+H(x)-H(z,x),\] which is symmetric with respect to $z$ and $x$. We have \begin{equation}\label{infobound} 0\le I\le \min (H(z), H(x)).\end{equation}

In quantum information theory, we are interested in the case \[M=(q_j\tr(E_i\rho_j)),\] where     $E_1, \dots, E_k\in M_n(\C)$ is  a POVM, $\rho_1, \dots, \rho_l\in M_n(\C)$ are density matrices, and  $q=(q_1, \dots, q_l)$ is a probability distribution. (Cf.\ the game discussed in the Introduction.  The distribution $q$ is the distribution of the random variable $x$ whose value is revealed to Alice.)  We then have

\bigskip
\noindent\bf Holevo's inequality~\cite{holevopaper}. \rm $I\le\chi.$

\bigskip
From Theorem~\ref{main}, we can deduce a different upper bound for $I$.
Let $m=\min (n,k)$ and \begin{equation}\label{psidef}\psi=\max H(Q_1,\dots Q_m),\end{equation} where the maximum is taken over all partitions of $[l]$ into $m$ pairwise disjoint subsets $L_1$, \dots, $L_m$ (empty set allowed), and \[Q_r=\sum_{j\in L_r} q_j \qquad (r=1,\dots, m).\]  We have \begin{equation}\label{psirange}0\le\psi\le \min(H(q),\log m).\end{equation}

\begin{Th}\label{psi} $I\le\psi$.
\end{Th}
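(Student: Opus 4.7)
\textbf{Proof plan for Theorem~\ref{psi}.} The strategy is to exploit Theorem~\ref{main} to replace the POVM-and-density-matrices channel $A = (\tr(E_i\rho_j)) \in \mathcal{Q}_n(k,l)$ with a convex combination of very simple classical ``deterministic'' channels, and then invoke the well-known convexity of mutual information in the channel matrix (for a fixed input distribution).

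First, I would regard the joint distribution $M$ as arising from an input random variable $x$ with distribution $q$ that is fed through the $k\times l$ stochastic channel $A=(\tr(E_i\rho_j))$, and write $I=I_q(A)$ to emphasize the functional dependence. By Theorem~\ref{main}, $A\in\mathcal{C}_n(k,l)$, so we can write
\[
A=\sum_{t} \lambda_t A^{(t)},
\]
where $\lambda_t\ge 0$, $\sum_t\lambda_t=1$, and each $A^{(t)}$ is a $k\times l$ stochastic 0-1 matrix with at most $n$ nonzero rows; since the matrix has only $k$ rows, the number of nonzero rows is in fact at most $m=\min(n,k)$. Each such $A^{(t)}$ corresponds to a function $f_t\colon[l]\to[k]$ (namely, column $j$ has its unique $1$ in row $f_t(j)$) whose image has cardinality at most $m$.

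Next, I would use the standard fact that $I_q(\cdot)$ is a convex function of the channel matrix when $q$ is fixed (this is a direct consequence of the log-sum inequality, or equivalently of the joint convexity of relative entropy). Thus
\[
I = I_q(A) \le \sum_t \lambda_t\, I_q(A^{(t)}).
\]
For each $t$, the channel $A^{(t)}$ is deterministic: given input $x$, the output is $z=f_t(x)$ with probability $1$. Hence $I_q(A^{(t)})=H(f_t(x))-H(f_t(x)\mid x)=H(f_t(x))$. Now $f_t(x)$ takes at most $m$ distinct values, so its distribution is of the form $(Q_1,\dots,Q_m)$ where $Q_r=\sum_{j\in L_r}q_j$ for the partition $L_1,\dots,L_m$ of $[l]$ obtained from the fibres of $f_t$ (padded with empty sets if fewer than $m$ values occur). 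By the definition~\eqref{psidef} of $\psi$, we get $H(f_t(x))\le\psi$ for every $t$.

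Combining these two estimates yields $I\le\sum_t\lambda_t\psi=\psi$, which is Theorem~\ref{psi}. I do not expect any genuine obstacle: the decomposition of $A$ as a convex combination of 0-1 stochastic matrices with at most $m$ nonzero rows is handed to us by Theorem~\ref{main}, the convexity of mutual information in the channel is standard, and on deterministic channels the identity $I_q(A^{(t)})=H(f_t(x))$ is immediate. The only thing to be slightly careful about is tracking that the bound on the number of nonzero rows is $m=\min(n,k)$ rather than $n$, so that $H(f_t(x))$ is indeed the entropy of a distribution supported on at most $m$ points, which is what $\psi$ bounds.
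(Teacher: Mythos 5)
Your proposal is correct and follows essentially the same route as the paper: both arguments rest on the convexity of mutual information in the channel matrix for fixed input distribution $q$, the identification $\mathcal Q_n(k,l)=\mathcal C_n(k,l)$ from Theorem~\ref{main} to reduce to the vertices (stochastic 0-1 matrices with at most $m=\min(n,k)$ nonzero rows), and the observation that on such a deterministic channel $I=H(z)\le\psi$. The only cosmetic difference is that you write out the convex combination and the averaging explicitly, whereas the paper simply notes that a convex function on a polytope attains its maximum at a vertex.
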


\begin{proof}
It is well known that the mutual information for the  matrix $M=(q_ja_{ij})$ is a convex function of the stochastic matrix $A=(a_{ij})$ if the probability distribution    $q=(q_1, \dots,q_l)$ is fixed.  Thus, for $a_{ij}=\tr (E_i\rho_j)$, in which case we have $A\in\mathcal Q_n(k,l)=\mathcal C_n(k,l)$, the maximum of $I$ is attained when $A$ is a  vertex of this polytope, i.e., a  stochastic 0-1 matrix with at most $m$ nonzero rows. For such $A$, we have
 \[I=H(z)+H(x)-H(z,x)=H(z)\le\psi,\]
since $x$ and $(z,x)$ both have distribution $q$ and $z$ has a distribution of the form $Q_1$, \dots, $Q_m$.
\end{proof}

\begin{Rem}\label{notimply} In view of \eqref{chirange} and \eqref{psirange}, either one of Holevo's inequality and Theorem~\ref{psi} implies $I\le 
\log n
$.

 It is obvious that Theorem~\ref{psi} does not imply Holevo's inequality, i.e., it is possible to have $\chi< \psi$. 

We now show that Holevo's inequality does not imply Theorem~\ref{psi}, even if we assume that $n\le k$, so that $m=n$.  For example, let $n=2$, $l=3$, and let $\rho_j$ $(j=1,2,3)$ be projections onto three lines in the plane $\R^2$ that pass through the origin and mutually form angles of $\pi/3$. Let $q_j=1/3$ $(j=1,2,3)$. Then $S(\rho_j)=0$, so \[\chi=S\left(\sum_{j=1}^3 q_j\rho_j\right)=S({\bf 1}/2)=\log 2,\] while \[\psi=H\left(\frac13,\frac23\right)=\frac13\log 3 +\frac23\log\frac32=\log\frac3{\sqrt[3]4},\] whence $\psi<\chi$ for this example.  
\end{Rem}

We now have two upper bounds for the mutual information $I$, of very different nature: Holevo's inequality is analytic, Theorem~\ref{psi} is combinatorial.  A common lower bound for $\chi$ and $\psi$ is \[\omega=\max \left( S\left(\sum_{r=1}^m Q_r\bar\rho_r\right)-\sum_{r=1}^m Q_rS(\bar\rho_r)\right),\] where the maximum and the $Q_r$ are to be understood as in the definition \eqref{psidef} of $\psi$, and \[\bar\rho_r=\frac1{Q_r}\sum_{j\in L_r}q_j\rho_j\qquad (1\le r\le m,\quad Q_r\ne 0).\]
Indeed, for all partitions $L_1$, \dots, $L_m$, we have 
 \[ S\left(\sum_{r=1}^m Q_r\bar\rho_r\right)-\sum_{r=1}^m Q_rS(\bar\rho_r)\le H(Q_1,\dots, Q_m)\] by \eqref{chirange},
whence $\omega\le\psi$. Also,  $\omega\le\chi$ by concavity of $S$. 

In an earlier version of this paper, we conjectured that
$I\le\omega$ always holds. We are indebted to  Michele Dall'Arno for pointing out to us that this fails already for the projections $\rho_j$ and probabilities $q_j=1/3$  $(j=1,2,3)$ discussed in Remark~\ref{notimply} above, if we choose $E_i=(2/3)({\bf 1}-\rho_i)$ $(i=1,2,3)$, cf.\ \cite{sasaki}.  It is unclear whether there is an upper bound for $I$ that is given by a  `natural' formula and implies both Holevo's inequality and Theorem~\ref{psi}. 
\bigskip

\noindent
{\bf Acknowledgements.}
The authors wish to thank Aidan J.\ Klob\-uch\-ar (a student of the second author) for writing up the game with the \$1 bill~\cite{Aidan}, and M\'arton 
Nasz\'odi and G\'abor Tardos for useful comments and conversations.

\end{document}